\documentclass[12pt]{article}
\usepackage{a4}
\usepackage{latexsym}
\usepackage{amsmath}
\usepackage{amssymb}
\usepackage{tikz}
\usetikzlibrary{arrows.meta}
\usepackage{graphicx}
\usepackage{hyperref}

\usepackage{epic}

\newcommand{\bool}{{\bf B}}
\newcommand{\simcirc}{{\sf SIMCIRC}\ }

\newtheorem{theorem}{Theorem}

\newtheorem{lemma}[theorem]{Lemma}

\newenvironment{proof}{\noindent{\bf Proof:}}{$\Box$

\vspace{3 mm}}

\bibliographystyle{plain}

\begin{document}
\title{Complexity of Simon's problem in classical sense}

\author{Hans Zantema\\
Eindhoven University of Technology, The Netherlands\\
email:  h.zantema@tue.nl \\
Radboud University Nijmegen, The Netherlands}

\maketitle

\begin{abstract}
Simon's problem is a standard example of a problem that is exponential in classical sense, while it admits a polynomial solution in quantum computing.
It is about a function $f$ for which it is given that a unique non-zero vector $s$ exists for which $f(x) = f(x \oplus s)$ for all $x$, where $\oplus$ is
the exclusive or operator. The goal is to find $s$.
The exponential lower bound for the classical sense assumes that $f$ only admits black box access. In this paper we investigate classical complexity when $f$
is given by a standard representation like a circuit. We focus on finding the vector space of all vectors $s$ for which $f(x) = f(x \oplus s)$ for all $x$,
for any given $f$. Two main results are: (1) if $f$ is given by any circuit, then checking whether this vector space contains a non-zero element is NP-hard,
and (2) if $f$ is given by any ordered BDD, then a basis of this vector space can be computed in polynomial time.
\end{abstract}

\section{Introduction}

Simon's problem \cite{S97} is the following. A function $f : \{0,1\}^n \to \{0,1\}^n$ is given with the particular property that exactly one non-zero $s \in \{0,1\}^n$ exists
such that $f(x \oplus s) = f(x)$ for all $x \in \{0,1\}^n$, where $\oplus$ stands for the {\em exclusive or} operation on each of the $n$ coordinates.
The goal is to find $s$. Although the problem
itself is purely artificial, it is of interest for the following reason. When $f$ is a {\em black box} function, that is, the only access to $f$ is by doing
queries: apply $f$ on a given input and observe the output, the worst case complexity to find $s$ in the classical sense is exponential in $n$. A basic observation
is that as soon as two distinct values $x,y$ are found with $f(x) = f(y)$ then $s = x \oplus y$, but finding such $x,y$ may take exponential time. On the other hand,
a {\em quantum circuit} can be designed that solves the problem in linear time \cite{S97}. In this way Simon's problem has become a standard example showing that quantum computing
may be exponentially faster than classical computing.

For this exponential gap it is crucial that $f$ is purely black box: nothing is known about how $f$ is computed. In contrast, in this paper we assume that $f$ is given
by some standard representation, and we investigate the corresponding classical complexity of Simon's problem. Before discussing the particular representations we give a
general observation on information content. For any non-zero $s \in \{0,1\}^n$ we can construct $2^{2^{n-1}}$ distinct functions $f$ having the required property as follows.
Let $i \in \{1,\ldots,n\}$ such that $s_i = 1$. Let $f_1 : \{x \in \{0,1\}^n \mid x_i = 0\} \to \{0,1\}^n$ be arbitrary, and define $f(x) = f_1(x)$ if $x_i = 0$ and
$f(x) = f_1(x \oplus s)$ if $x_i = 1$, note that $(x \oplus s)_i = 0$ if both $s_i = 1$ and $x_i = 1$. Now $f$ satisfies the required property. For every $s$ there are $2^{2^{n-1}}$
distinct functions $f_i$ all yielding distinct functions $f$ with the required property. Hence in any representation an arbitrary instance of such $f$ will require at least
$\log(2^{2^{n-1}})$ bits on average, being of exponential size. It makes sense to consider cases for which an algorithm to compute $f$ is given, and is of less
than exponential size.

A standard way to implement any function $f : \{0,1\}^n \to \{0,1\}^m$ is by a {\em circuit}: an acyclic network starting by $n$ input nodes, and in which values of
internal nodes are computed by the standard boolean operators $\neg$, $\vee$, $\wedge$. Eventually, $m$ resulting nodes serve as the output nodes, where $m=n$ in case of Simon's
problem. A natural question now is:
when $f$ is given by such a circuit, how hard is it to solve Simon's problem? A standard measure for hardness is being NP-hard. However, that is about decision problems, that is,
problems with a binary answer as output, and here the question is to find a vector $s$. A natural way to construct $s$ by a series of decision queries is by first
putting $s_1$ to 0, and then ask whether in the remaining problem on $s_2,\ldots,s_n$ admits an answer on Simon's problem. If it does, then $s_1$ is defined to be 0, otherwise
$s_1$ is defined to be 1: from the fact that $s$ exists and no $s$ satisfies $s_1 = 0$ indeed $s_1 = 1$ may be concluded.
This is repeated for $s_2,s_3,\ldots$ until the full vector $s$ has been determined. So the key decision problem in this approach is: given
any boolean function $f$, does there exist $s$ such that $f(x \oplus s) = f(x)$ for all $x$? One main result of this paper is that this decision problem is NP-hard.
Note that NP-hardness of this natural building block does not prove NP-hardness of Simon's problem itself, since this approach does not exploit the given property that
a corresponding non-zero $s$ uniquely exists.

A special kind of circuit is a {\em reduced ordered binary decision diagram} (ROBDD). A key property is that every boolean function has a unique representation as an ROBDD,
and that's one of the reasons that ROBDDs provide a standard data structure in symbolic model checking, VLSI design, and many other applications. A second main result of this paper
is that if $f$ is given by an ROBDD, then Simon's problem can be solved in polynomial time. In fact reducedness is not used here, so the results holds for any ordered BDD.

The paper is organized as follows.
First in Section \ref{secprel} we give some preliminaries: we formulate Simon's problem in terms of vector spaces over the Boolean domain. Next in Section \ref{seccirc} we present our NP-hardness
result for $f$ being presented by a circuit. Finally, in Section \ref{secbdd} we present our polynomial algorithm for $f$ being presented as an ordered BDD.

\section{Preliminaries}
\label{secprel}

We write $\bool = \{0,1\}$ for the Booleans, where $0,1$ are identified with false, true. The {\em exclusive or} operation (usually denoted by $\oplus$) we denote by $+$,
as it coincides with additional modulo 2. So $0+0 = 1+1 = 0$, $0+1 = 1+0 = 1$. For any $x,y \in \bool^n$ we write
$x+y = (x_1 + y_1, x_2 + y_2,\ldots,x_n+y_n)$ for $x = (x_1,\ldots,x_n)$ and $y = (y_1,\ldots,y_n)$.

For any function $f : \bool^n \to \bool^m$ we define
\[ V(f) = \{ s \in \bool^n \mid \forall x \in \bool^n : f(x + s) = f(x)\}. \]

With $+$ as addition and $\wedge$ as multiplication, $\bool$ is a field. It is straightforward to check that for  any function $f : \bool^n \to \bool^m$ the set $V(f)$ is a {\em vector space}
over $\bool$, that is, $\overline{0} = (0,0,\ldots,0) \in V(f)$, and $a+b \in V(f)$ for all $a,b \in V(f)$. We recall some basic notions and properties of such a vector space \cite{L97}:
\begin{itemize}
\item The size of any vector space $V \subseteq \bool^n$ is $2^k$ for some $k$ satisfying $0 \leq k \leq n$; this number $k$ is called the {\em dimension} of $V$.
\item A set $A \subseteq \bool^n$ is called {\em linearly independent} if no non-empty subset of $A$ has sum $\overline{0}$.
\item A set $A \subseteq V$ is called a {\em basis} of the vector space $V$ if it is linearly independent and every element of $V$ can be written as a sum of elements of $A$. Due to linear independence this representation is unique, in particular $\overline{0}$ is uniquely written as the empty sum.
\item For a vector space of dimension $k$ every basis has exactly $k$ elements.
\item A vector space $V$ is of dimension 0 if and only if $V = \{ \overline{0} \}$. A vector space $V$ is of dimension 1 if and only if $V = \{\overline{0}, s\}$ for some $s \in \bool^n \setminus \{ \overline{0} \}$.
\end{itemize}

So Simon's problem can be formulated as follows: given $f : \bool^n \to \bool^n$ for which $V(f)$ has dimension 1, find $s \neq \overline{0}$ such that $V(f) = \{\overline{0}, s\}$.

Such a function $f : \bool^n \to \bool^m$ can be seen as an $m$-tuple Boolean functions $f_1,\ldots,f_m : \bool^n \to \bool$, where $f(x) = (f_1,\ldots,f_m)$ for every $x \in \bool^n$.
From the definition of $V(f)$ it is clear that $V(f) = \bigcap_{i=1}^m V(f_i)$. So Simon's problem can be reformulated as follows: given $f : \bool^n \to \bool^n$ for which $V(f)$ has dimension 1, find $s \neq \overline{0}$ such that $s \in \bigcap_{i=1}^m V(f_i)$.

\section{Representing functions by circuits}
\label{seccirc}

A natural way to represent a function $f : \bool^n \to \bool^m$ is by a {\em circuit} with $n$ input nodes and $m$ output nodes, and several internal nodes of which the (boolean) values are
computed by logical gates in an acyclic way. Conceptually this is the same as stating that every output is a logical formula composed from the operators $\neg$, $\vee$ and $\wedge$ and
$n$ variables representing the input nodes, while in computing the size of such a logical formula we allow equal subformulas to be shared. So we represent circuits as being formulas,
one formula for every output node. For instance, the following circuit with three input nodes $x_1,x_2,x_3$ and two output nodes $f_1,f_2$ represents the function $f : \bool^3 \to \bool^2$
defined by
\[ f(x_1,x_2,x_3) = ((x_1 \wedge x_2) \vee \neg((x_1 \wedge x_2) \wedge x_3),\neg((x_1 \wedge x_2) \wedge x_3) \wedge \neg( x_2 \vee x_3)).\]

\begin{center}
\begin{tikzpicture}
\node[circle,draw] (n11) at (0.3,1) {$\wedge$};
\node[circle,draw] (n12) at (1.8,1.4) {$\vee$};
\node[circle,draw] (n21) at (0.5,2) {$\wedge$};
\node[circle,draw] (n22) at (1.5,2.7) {$\neg$};
\node[circle,draw] (n3) at (0.5,3) {$\neg$};
\node (f1n) at (0,4.7) {$f_1$};
\node (f2n) at (1,4.7) {$f_2$};
\node[circle,draw] (f1) at (0,4) {$\vee$};
\node[circle,draw] (f2) at (1,4) {$\wedge$};
\node[circle,draw] (x1) at (0,0) {$x_1$};
\node[circle,draw] (x2) at (1,0) {$x_2$};
\node[circle,draw] (x3) at (2,0) {$x_3$};
\draw (n11) -- (x1);
\draw (n11) -- (x2);
\draw (n12) -- (x2);
\draw (n12) -- (x3);
\draw (n22) -- (n12);
\draw (n21) -- (x3);
\draw (n21) -- (n11);
\draw (n3) -- (n21);
\draw (f1) -- (n3);
\draw (f1) edge[out=250,in=135,looseness=1] (n11);
\draw (f2) -- (n3);
\draw (f2) -- (n22);
\end{tikzpicture}
\end{center}

The {\em exclusive or} operator $+$ may also be used in formulas, where $p + q$ may be seen as an abbreviation of $(p \wedge \neg q) \vee (\neg p \wedge q)$, in which
in the circuit the instances of $p$ are shared, and similarly for $q$.

A circuit with $n$ input nodes and one output node is called {\em satisfiable} if it is possible to give Boolean values to the input nodes such that the output node yields true.
It is well-known \cite{HMU07} that checking satisfiability of such a circuit is NP-complete, that is, the problem is in NP and it is NP-hard.

The goal of this section is to investigate the hardness of Simon's problem in case the function $f : \bool^n \to \bool^n$ is given by a circuit with $n$ input nodes and $n$ output nodes.
Taking each of the $n$ output nodes separately, this can be seen as $n$ circuits for the functions $f_1,\ldots,f_n$, each having one output node,
and the goal is to find $s \in \bigcap_{i=1}^m V(f_i)$. One problem about stating NP-hardness is that Simon's problem
is about finding this Boolean vector $s$, while the notion of NP-hardness is about decision problems: problems with only a yes/no answer.
A natural approach to find the vector $s = (s_1,\ldots,s_n)$ by a series of calls to a decision problem is as follows.
First replace $s_1$ by 0, and then check whether $s = (0,s_2,\ldots,s_n) \neq \overline{0}$ exists in $V(f)$. If so, we fix $s_1$ to 0, otherwise we know that by the assumption that
$s \neq \overline{0}$ exists such that $s \in V(f)$ we may fix $s_1$ to 1. Next we do the same for $s_2$, and continue until $s = (s_1,\ldots,s_n)$ has been determined completely.
In this approach we apply \simcirc consecutively for values $n-1, n-2, \ldots$, for \simcirc defined as follows:

\begin{quote}
(\simcirc) Given a circuit  with $n$ input nodes and one output node, does there exist $s \neq \overline{0}$ such that $s \in V(f)$,
for $f$ being the function $f : \bool^n \to \bool$ defined by the circuit?
\end{quote}

We will prove that \simcirc is NP-hard. The key idea is to define a circuit $F(C)$ for any arbitrary single output circuit $C$ such that $C$ is unsatisfiable if and only if
this \simcirc question yields true for the circuit $F(C)$. Let $n$ be the number of inputs of $C$, let $p_1,\ldots,p_n$ be the $n$ variables corresponding to the $n$ input nodes of $C$.
In order to define the transformation $F$ we introduce $n+1$ fresh variables $p,q_1,\ldots,q_n$.
We define $h(C)$ to be a copy of $C$ in which every variable $p_i$ has been replaced by $p_i \wedge q_i$, and we define $D = (\bigwedge_{i=1}^n p_i) \wedge (\bigwedge_{i=1}^n \neg q_i)$.
Now $F(C)$ is defined by
\[ F(C) \; = \; (p \vee (h(C) + D)) \wedge (\neg p \vee D).\]
A main observation is that if $C$ is unsatisfiable, then $F(C)$ does not depend on $p$, yielding some $s \neq \overline{0}$ such that $s \in V(f)$,
being one direction of the desired key property of $F(C)$.
The ingredients $h(C)$ and $D$ of the definition of $F$ are chosen in order to be able to prove the reverse direction.

Now NP-hardness of \simcirc is an immediate consequence of the following theorem, combined with the observation that $F$ is polynomial.

\begin{theorem}
\label{thmcirc}
  Let $C$ be an arbitrary circuit with $n$ inputs and one output. Then $C$ is unsatisfiable if and only if there exists $s \neq \overline{0}$ such that $s \in V(f)$, for $f$ being the function $f : \bool^{2n+1} \to \bool$
  defined by the circuit $F(C)$.
\end{theorem}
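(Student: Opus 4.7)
For the forward direction, I would observe that $h(C)(y)$ equals $C$ applied to the tuple $(y_{p_1} \wedge y_{q_1}, \ldots, y_{p_n} \wedge y_{q_n})$. If $C$ is unsatisfiable then $h(C)$ is identically $0$, so
\[ F(C) \;=\; (p \vee D) \wedge (\neg p \vee D) \;=\; D, \]
which is independent of $p$. Hence the unit vector with $1$ in the $p$ coordinate and $0$ elsewhere lies in $V(f) \setminus \{ \overline{0} \}$, proving one direction.

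For the converse, I would write every input as $(p,y)$ with $y \in \bool^{2n}$ ranging over the $p_i$ and $q_i$ coordinates, split $s = (s_p, s_y)$, and start from the two identities
\[ F(0,y) \;=\; h(C)(y) + D(y), \qquad F(1,y) \;=\; D(y). \]
The key combinatorial fact I would use is that $D$ takes the value $1$ at the unique point $y^* = (1,\ldots,1,0,\ldots,0)$. If $s_p = 0$, invariance of $F(1,\cdot)$ under the shift by $s_y$ forces $D(y) = D(y + s_y)$ for all $y$; evaluating at $y = y^*$ then forces $s_y = 0$, contradicting $s \neq \overline{0}$. So $s_p = 1$, and invariance now becomes the identity
\[ h(C)(y) \;=\; D(y) + D(y + s_y) \quad \text{for all } y \in \bool^{2n}. \]
When $s_y = 0$ the right-hand side is identically $0$, so $h(C) \equiv 0$; since every Boolean input to $C$ is of the form $(y_{p_i} \wedge y_{q_i})_i$ (take $y_{p_i} = y_{q_i}$, for instance), this yields unsatisfiability of $C$ as required.

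The main obstacle is ruling out the remaining case $s_p = 1$ with $s_y \neq \overline{0}$. Here the right-hand side $D(y) + D(y + s_y)$ is the indicator of the two-point set $\{y^*,\, y^* + s_y\}$, so $h(C)$ would have to equal $1$ on exactly these two points. I plan to exploit the fiber structure of the map $y \mapsto (y_{p_1} \wedge y_{q_1},\ldots,y_{p_n} \wedge y_{q_n})$: the preimage of a Boolean tuple of Hamming weight $k$ has size $3^{n-k}$, since each zero coordinate admits the three pairs $(0,0),(0,1),(1,0)$ while each one coordinate admits only $(1,1)$. Consequently the support of $h(C)$ is a disjoint union of fibers whose sizes lie in $\{1,3,9,\ldots\}$, and the only singleton fiber is the one over $(1,\ldots,1)$. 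No such disjoint union can have size exactly $2$, giving the required contradiction and completing the proof.
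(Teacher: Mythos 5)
Your proof is correct. The forward direction and the case $s_p=0$ coincide in substance with the paper's: the paper phrases the $s_p=0$ contradiction as the forced equivalence of $D$ with a variant $D'$ in which at least one variable is negated, which is exactly your observation that $D$ is the indicator of the single point $y^*$ and hence cannot be invariant under a non-zero shift. Where you genuinely diverge is the hard case $s_p=1$, $s_y\neq\overline{0}$. The paper argues locally and constructively: from a hypothetical satisfying point $x$ of $h(C)$ with $D(x)=1$ (after a WLOG swap of $x$ and $x+s$) it builds two modifications $x',x''$, agreeing with $x$ on the values $p_i\wedge q_i$ (so $h(C)(x')=h(C)(x'')=1$) but with $D(x')=D(x'')=0$, forcing $D$ to be true at the two distinct points $x'+s$ and $x''+s$, contradicting that $D$ has a unique satisfying assignment. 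You instead argue globally by counting: the support of $h(C)$ is a disjoint union of fibers of the map $y\mapsto(y_{p_1}\wedge y_{q_1},\ldots,y_{p_n}\wedge y_{q_n})$, the fiber sizes are $3^{n-k}$ with the unique singleton fiber sitting over $(1,\ldots,1)$, so no union of fibers has cardinality exactly $2$, while $D(y)+D(y+s_y)$ has support of size exactly $2$. Both arguments rest on the same two structural facts, namely that $h(C)$ factors through the coordinatewise conjunction map and that $D$ is the indicator of the single point $y^*$; your version avoids the paper's WLOG step and explicit construction of $x',x''$, and it makes transparent why the gadget works (three preimages per zero coordinate rule out support size $2$), whereas the paper's construction is more elementary and needs no fiber-size computation. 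You also handle the subcase $s_y=\overline{0}$ explicitly and spell out the step, left implicit in the paper, that unsatisfiability of $h(C)$ transfers to $C$ via the diagonal substitution $y_{p_i}=y_{q_i}$.
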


\begin{proof}
First assume that $C$ is unsatisfiable. Then both $C$ and $h(C)$ are equivalent to false, and $h(C) + D$ is equivalent to $D$, and also $F(C)$ is equivalent to $D$.
Let $s \neq \overline{0}$ be the input vector in which $p$ is true and all other variables $p_i,q_i$ are false. Then $x + s$ is a copy of $x$ in which the value of $p$ is swapped,
while the values of all other variables are kept. As $F(C) \equiv D$ and $p$ does not occur in $D$, we obtain  $s \in V(f)$, proving the 'only if' part of the theorem.

Conversely, assume that  $s \in V(f)$ for some $s \neq \overline{0}$, that is, $s \in \bool^{2n+1}$ and $F(C)(x) = F(C)(x+s)$ for all $x \in \bool^{2n+1}$.
We have to prove that $C$ is unsatisfiable, that is, equivalent to false. We consider two cases depending on the value of $s_p$, where $s_p$ is the coordinate of
$s \in \bool^{2n+1}$ corresponding to the variable $p$.

(1) First assume that $s_p$ is false. Choose an arbitrary $x \in \bool^{2n+1}$ such that $x_p = 1$, hence also $x_p + s_p = 1$. Then $D(x) = F(C)(x) = F(C)(x+s) = D(x+s)$. Note that
$D = (\bigwedge_{i+1}^n p_i) \wedge (\bigwedge_{i+1}^n \neg q_i)$, and $D(x+s) = D'(x)$ for $D'$ obtained from $D$ by replacing at least one of the variables $p_i,q_i$ by its negation.
As this holds for every $x \in \bool^{2n+1}$ for which $x_p$ is true, we obtain that $D$ and $D'$ are equivalent, contradiction.

(2) Next assume the remaining case in which $s_p$ is true. Choose an arbitrary $x \in \bool^{2n+1}$ with $x_p$ is false. Then $h(C)(x) + D(x) = F(C)(x) = F(C)(x+s) = D(x+s)$. So
$h(C)(x) = D(x) + D(x+s)$. We consider $h(C)$ and $D$ as circuits over $p_1,\ldots,p_n,q_1,\ldots,q_n$, and have to prove that $h(C)$ is unsatisfiable. If it is not, then we can choose
$x$ such that $h(C)(x) = D(x) + D(x+s) = 1$. That is, $D(x) = 1$ and $D(x+s) = 0$, or conversely. In the latter case we replace $x$ by $x+s$. So without loss of generality
we may assume that $D(x) = 1$
and $D(x+s) = 0$. From $D(x) = 1$ and the definition of $D$ we conclude $x_{p_i} = 1$  and $x_{q_i} = 0$ , for all $i = 1,\ldots,n$. Now we modify $x$ to $x'$ and $x''$ as follows.
\[ x'_{p_1} = 0, \; x'_{q_1} = 0, \; x''_{p_1} = 0, \; x''_{q_1} = 1, \; x'_{p_i} = x''_{p_i} = 1, x'_{q_i} = x''_{q_i} = 0 \mbox{ for $i>1$}.\]
Since $h(C)$ only depends on $p_i \wedge q_i$, and the value of $p_i \wedge q_i$ does not change by replacing $x$ by $x', x''$, both $h(C)(x')$ and $h(C)(x'')$ yield true just like $h(C)(x)$.
Since $h(C)(x') = D(x') + D(x'+s) = 1$ and $D(x') = 0$, we obtain $D(x'+s) = 1$. Similarly we obtain $D(x''+s) = 1$. But now we have two distinct boolean vectors $x'+s$ and $x''+s$ for which
$D$ both yield true, contradicting the definition of $D$.

\end{proof}

In our version of \simcirc we considered functions of which the output is a single boolean. But this generalizes easily: the same result is obtained for functions
$f : \bool^n \to \bool^m$ for a fixed $m$, or functions $f : \bool^n \to \bool^n$ as they occur in Simon's problem,
simply by taking multiple output circuits in which all outputs are equal.

Although \simcirc seems to be a natural building block for solving Simon' problem, Theorem \ref{thmcirc} does not prove that Simon's problem itself is NP-hard.
A main observation is that in using \simcirc
to solve Simon's problem, the given fact that $V(f)$ contains a unique $s \neq \overline{0}$ is not exploited. However, by using this unicity, the problem is in the class UP \cite{V76}
(unambiguous polynomial) rather than in NP, and UP-complete problems are not known.

\section{BDDs}
\label{secbdd}

A standard data structure for representing boolean functions is given by {\em BDD}s: boolean decision diagrams. Introduced in \cite{B86}, they have several applications in,
e.g., symbolic model checking and VLSI design (\cite{MT98,CHVB18}). Surprisingly, they can be presented as a special instance of circuits by adding additional operators.
For every variable $p$ we introduce an additional binary operator, also denoted by $p$ in a prefix notation to be used in circuits / formulas,
in which the meaning of $p(A,B)$ is defined to be $(\neg p \vee A) \wedge (p \vee B)$. Intuitively, the meaning of $p(A,B)$ is:
if $p$ then $A$ else $B$. It is straightforward to see that the notion of BDD as introduced in \cite{B86,MT98,CHVB18} coincides with a circuit only composed from false,
true, and the operators $p$ for $p$ running over all variables. Hence here we identify BDDs with such circuits.
A BDD is called {\em reduced} if it is
optimally shared, that is, no two distinct nodes represent the same boolean function. A BDD is called {\em ordered} with respect to a total order $<$ on the variables if for every node
of the form $p(A,B)$, the nodes $A,B$ are false, true, or of the form $q(-,-)$ with $p < q$. So for every node in an ordered BDD all nodes below are larger.
The notion of reduced ordered BDD is abbreviated to ROBDD. A key property of ROBDDs is {\em unicity}, see e..g, \cite{MT98}, Cor 6.12:
\begin{quote}
For a fixed order $<$ on variables, every boolean function on these variables has a unique representation as ROBDD with respect to $<$.
\end{quote}
For our result this uniqueness and being reduced does not play a role, but being ordered does. So we will focus on ordered BDDs.

\vspace{1mm}

\noindent \begin{minipage}{105mm}
An example of an ROBDD on the variables $p,q,r$ with respect to the order $p < q < r$ is given on the right; later on we will use it to illustrate our main algorithm.
The top node is the output.
To compute the value of the corresponding boolean function we start at this top node.
For every node we go to the left if the value of the corresponding variable is true, and to the right if this value is false, following the if-then-else meaning of the nodes.
This is continued until one of the leaves 0 or 1 is reached.
If it is 0, then the function result is false, if it is 1 then the function result is true.
\end{minipage}\hspace{5mm}
\begin{minipage}{4cm}
\begin{tikzpicture}
\node[circle,draw] (p) at (1,4) {$p$};
\node[circle,draw] (q1) at (0,3) {$q$};
\node[circle,draw] (q2) at (2,3) {$q$};
\node[circle,draw] (r1) at (0,1.5) {$r$};
\node[circle,draw] (r2) at (1.6,1.8) {$r$};
\node[circle,draw] (0) at (0,0) {$0$};
\node[circle,draw] (1) at (2,0) {$1$};
\draw (p) -- (q1);
\draw (p) -- (q2);
\draw (q1) -- (r1);
\draw (q2) -- (r2);
\draw (r1) -- (0);
\draw (r1) -- (1);
\draw (r2) -- (1);
\draw (r2) edge[out=330,in=30,looseness=1] (0);
\draw (q1) edge[out=300,in=60,looseness=1] (0);
\draw (q2) edge[out=320,in=45,looseness=1] (0);
\end{tikzpicture}
\end{minipage}

\vspace{1mm}

We may also consider BDDs with multiple output nodes,
representing functions $\bool^n \to \bool^m$, where $n$ is the number of variables and $m$ is the number of output nodes.
Just like in any circuit, every node represents a function $\bool^n \to \bool$.

Any BDD corresponds to a classical circuit representing the same function, obtained by expanding every occurrence of $p(A,B)$ by $(\neg p \vee A) \wedge (p \vee B)$.
In contrast to arbitrary circuits for which we showed that even the basic question whether $V(f)$ contains a non-zero vector is already NP-hard, for ordered BDDs this turns out to be
feasible in polynomial time. Not only this decision question is feasible, even computing a full basis, not only for ordered BDDs with a single output node, but also for any number of
output nodes. For doing so, we need some standard algorithms from linear algebra:
\begin{itemize}
\item When two vector spaces are given by a basis of each of them, a basis of the intersection of the vector spaces can be computed by the standard Zassenhaus algorithm in polynomial
time, see \cite{LRW97}.
\item Exploiting this algorithm, if $V,W$ are vector spaces and $v,w$ are vectors, then it can be checked whether $v+V \cap w+W$ is empty, and if not, an element in $v+V \cup w+W$ can be determined.
All of this in polynomial time.
\end{itemize}

For presenting the algorithm we need the notion $V(T,U)$ for $T,U$ being nodes in an ordered BDD, defined as follows:
\[ V(T,U) = \{ s \mid T(x) = U(x+s) \mbox{ for all $x$} \}.\]
It satisfies the following properties.
\begin{lemma}
\label{lembdd}
For every two nodes $T,U$ in an ordered BDD we have
\begin{enumerate}
\item $V(T,U) = V(U,T)$.
\item If $T,U$ are labeled by distinct variables, then $V(T,U) = \emptyset$.
\item If $s \in V(T,U)$ then $V(T) = V(U)$ and $V(T,U) = s + V(T)$.
\end{enumerate}
\end{lemma}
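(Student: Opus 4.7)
The plan is to handle parts (1) and (3) by direct algebraic manipulation of the defining identity $T(x) = U(x+s)$, and to reserve the ordered-BDD structure for part (2), which is the only step requiring anything structural.

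For (1), I would substitute $x \mapsto x+s$ in $T(x) = U(x+s)$ and invoke $s+s = \overline{0}$ to obtain $T(x+s) = U(x)$, which is precisely $s \in V(U,T)$; the substitution reverses, giving the other inclusion.

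For (3), I would fix $s \in V(T,U)$ and use part (1) to read the same relation as $U(x) = T(x+s)$. To show $V(T) = V(U)$, I would take $t \in V(T)$ and compute $U(x+t) = T(x+t+s) = T(x+s) = U(x)$, invoking the $T$-invariance in the middle step; symmetry then gives the reverse inclusion. For the coset identity $V(T,U) = s + V(T)$ I would observe that any two solutions $s, s' \in V(T,U)$ force $U$ to be invariant under the shift by $s+s'$, placing $s+s'$ in $V(U) = V(T)$ and hence $s'$ in $s + V(T)$; the reverse direction reduces to a one-line substitution of $s' = s+t$ for $t \in V(U)$ into the defining relation.

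Part (2) is the main obstacle, since it is the only step where the ordering hypothesis actually does work. Assuming WLOG $p < q$ for the top labels of $T$ and $U$, the ordered-BDD condition forces every variable appearing in the subtree rooted at $U$ to be at least $q$, so the function $U$ does not depend on its $p$-th argument, and neither does the shifted function $x \mapsto U(x+s)$ for any fixed $s$. If some $s$ lay in $V(T,U)$, then $T$ would coincide with this shift and therefore be independent of $x_p$, contradicting the presence of a nontrivial split on $p$ at the top of $T$. The subtle point I would need to address is that in a non-reduced ordered BDD the top node $p(A,B)$ may happen to have $A$ and $B$ computing the same function, so strictly the argument only forces $A \equiv B$; I would handle this either by preprocessing the BDD in polynomial time to collapse every such trivial split, or by reading "labelled by $p$" as "having $p$ as an essential variable", after which the contradiction is genuine.
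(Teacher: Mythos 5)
Your proofs of (1) and (3) are correct and coincide with what the paper compresses into ``1 and 3 directly follow from the definitions,'' and your core argument for (2) is also the paper's: the node whose top label is the smaller variable depends on that variable while the other does not (by orderedness all variables in its subtree are strictly larger), and a shift $x \mapsto x+s$ cannot create or destroy dependence on a coordinate, contradicting $T(x) = U(x+s)$ for all $x$.

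What you add at the end of (2) is, however, a genuine correction to the paper, not pedantry. The paper's proof asserts that ``one among $T,U$ depends on the smaller of these two variables,'' i.e.\ that a node labeled $p$ depends on $p$, while simultaneously claiming that ``being reduced does not play a role.'' These two claims are incompatible: with $p < q$, take $U = q(1,1)$ and $T = p(U,U)$; this is a legal ordered BDD, both nodes compute the constant true function, so $V(T,U) = \bool^2 \neq \emptyset$ even though the labels differ, refuting part (2) as literally stated for unreduced ordered BDDs. Under the paper's definition of reduced (no two distinct nodes represent the same function), a trivial split $p(A,B)$ with $A \equiv B$ cannot occur --- if $A = B$ as nodes then $T$ and $A$ would be distinct nodes with the same function --- so the lemma is sound for ROBDDs, and either of your repairs restores it in general; note only that the preprocessing variant must collapse splits whose children are \emph{functionally} equal, not merely the same node (distinct siblings can compute equal functions in an unreduced diagram), which in effect means running the standard polynomial-time reduction. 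Since part (2) is what licenses the main algorithm to compare only equally-labeled nodes, your caveat is also what makes the paper's claim that the result holds ``for any ordered BDD'' actually true.
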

\begin{proof}
1 and 3 directly follow from the definitions. For 2 let $T$ and $U$ be labeled by distinct variables, and assume $s \in V(T,U)$. Then one among $T,U$ depends on the smaller of these
two variables while the other does not, contradicting $T(x) = U(x+s)$ for all $x$.
\end{proof}

%Now we present the main theorem; right after it we show how the algorithm applies on an example.

\begin{theorem}
For an ordered BDD with respect to any order $<$, representing a boolean function $f : \bool^n \to \bool^m$, there is an algorithm that computes a basis for the vector space $V(f)$ in
time polynomial in the size of the BDD.
\end{theorem}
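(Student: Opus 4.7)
The plan is to compute, bottom-up through the BDD, the set $V(T,U)$ for every pair of nodes $(T,U)$ having the same variable label (or both being leaves), and to store each such set as either $\emptyset$ or a coset $s_0 + W$, represented by a witness $s_0$ and a basis of the underlying subspace $W$. By Lemma \ref{lembdd}(1,2) every other pair is either redundant by symmetry or equal to $\emptyset$, so this is enough. The final answer is $V(f) = \bigcap_{i=1}^m V(T_i, T_i)$, with $T_i$ the $i$-th output node, obtained by iterated Zassenhaus intersection.

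For the base case, $V(0,0) = V(1,1) = \bool^n$ (take $s_0 = \overline{0}$, $W = \bool^n$) and $V(0,1) = V(1,0) = \emptyset$, since a constant is preserved by every shift while two distinct constants never agree. For the inductive step, assume $T = p(A,B)$ and $U = p(A',B')$ share label $p$. Because the BDD is ordered, none of $A,B,A',B'$ depend on $p$. A case split on $s_p \in \{0,1\}$, combined with the if-then-else semantics of $p(\cdot,\cdot)$ applied to $x$ and $x+s$, yields the key identity
\[
V(T,U) \;=\; \bigl(V(A,A') \cap V(B,B') \cap H_0\bigr) \;\cup\; \bigl(V(A,B') \cap V(B,A') \cap H_1\bigr),
\]
where $H_i = \{s \in \bool^{n} : s_p = i\}$. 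The four sub-pair values are already available: any of them involving two nodes with different labels is simply $\emptyset$ by Lemma \ref{lembdd}(2).

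Each of the two bracketed terms is an intersection of three cosets of subspaces, hence computable in polynomial time using the Zassenhaus-based routines cited just before the theorem; the outcome is either $\emptyset$ or a coset. By Lemma \ref{lembdd}(3) the whole set $V(T,U)$ is itself either $\emptyset$ or a single coset of $V(T) = V(U)$; since the two pieces live in the disjoint hyperplanes $H_0$ and $H_1$, gluing them into one coset representation reduces to taking a representative from whichever piece is nonempty and, when both are, adjoining the difference of the two representatives to the basis.

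Complexity is then routine: there are $O(|\mathrm{BDD}|^2)$ relevant pairs, which can be processed in any topological order so that the four sub-pair data are available when needed; each pair costs a constant number of polynomial-time operations on subspaces of $\bool^n$; and the final intersection over the $m$ output nodes is again polynomial via iterated Zassenhaus. The main obstacle I anticipate is the union step: verifying cleanly that the two pieces always assemble into a genuine coset (guaranteed abstractly by Lemma \ref{lembdd}(3), but needing to be extracted explicitly from the Zassenhaus output), and handling the degenerate cases where one piece is empty, both are empty, or the underlying subspaces returned by the two intersections disagree (which, as Lemma \ref{lembdd}(3) forces, can only happen when at least one piece is empty).
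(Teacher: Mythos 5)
Your proposal is correct and takes essentially the same approach as the paper: a bottom-up dynamic program over pairs of equally-labeled nodes, splitting on the value of $s_p$ to reduce $V(T,U)$ to the four child pairs (exactly your displayed identity, which the paper implements as ``try $s_p=0$ first, else $s_p=1$''), using Zassenhaus-based coset intersections, with a final intersection over the $m$ output nodes. The one real difference is bookkeeping: the paper stores only a single witness $s_{TU}$ per pair together with a basis of $V(T)$ per node, recovering the full coset as $V(T,U) = s_{TU} + V(T)$ via Lemma \ref{lembdd}(3), which sidesteps the coset-gluing step you flag as your main obstacle --- though your gluing recipe is sound as stated, since when both branches are nonempty their underlying subspaces are canonically $V(T)\cap H_0$ and adjoining the difference of the two representatives does yield the coset, and your choice to keep all cosets inside the full space $\bool^n$ cleanly handles the skipped-level coordinates that the paper treats by padding.
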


\begin{proof}
Let $p_1 < p_2 < \cdots < p_n$ be the $n$ variables. We identify each node $T$ of the BDD by the boolean function it represents. The algorithm runs over all nodes labeled
by $p_i$, for $i$ running down from $n$ to 1, and controlled by Lemma \ref{lembdd} it computes
\begin{enumerate}
\item whether $V(T,U)$ is empty for every two nodes $T,U$ labeled by $p_i$, and if not then it computes an element $s_{TU} \in V(T,U)$;
\item a basis of $V(T)$ for every node $T$ labeled by $p_i$.
\end{enumerate}
As $T$ labeled by $p_i$ only depends on $p_i,p_{i+1},\ldots,p_n$, $V(T)$ is considered as a subspace of the corresponding $n-i+1$-dimensional vector space.

As the computation is done for $i$ running down from $n$ to 1, during the process a node labeled by $p_i$ is considered when a basis
of $V(T)$ has been computed already for all nodes $T$ with label $p_j$, $j> i$, and for all nodes with label $> p_i$ the corresponding vectors $s_{TU}$ are known.

First for any two distinct nodes $T,U$ labeled by $p_i$ a vector $s_{TU} \in V(T,U)$ is computed if it exists. Let $T = p_i(T_1,T_2)$ and $U = p_i(U_1,U_2)$. For the new coordinate of $s_{TU}$
corresponding to the variable $p_i$ there are two options: either 0 or 1. First consider adding 0. Then both $V(T_1,U_1)$ and $V(T_2,U_2)$ should be non-empty, and $s_{T_1U_1} \in V(T_1,U_1)$
and $s_{T_2U_2} \in V(T_2,U_2)$ have been computed already. In case the root of $T_1$ is $p_{i+1}$ then $s_{T_1U_1}$ has the right dimension, if it is $p_j$ for $j > i+1$ then $j-i-1$ zeros
are added to the vector $s_{T_1U_1}$ before the new 0 for $p_i$ is added, and similarly for $T_2$.

Now the algorithm checks whether
$s_{T_1U_1} + V(T_1) \cap s_{T_2U_2} + V(T_2)$ is empty. If it is not, then $s_{TU}$ is defined to be an element of this intersection, extended by a 0 for the new position $p_i$.
One checks that indeed $s_{TU} \in V(T,U)$.
If this fails, then the other option is that the value for $p_i$ is 1. Then both $s_{T_1U_2}$ and $s_{T_2U_1}$ should exist, and extended by zeros if necessary. In that case it is checked whether
$s_{T_1U_2} + V(T_1) \cap s_{T_2U_1} + V(T_2)$ is empty. If it is not, then $s_{TU}$ is defined to be an element of this intersection, extended by a 1 for the new position $p_i$.
One checks that indeed $s_{TU} \in V(T,U)$. In all other cases $V(T,U)$ is empty.

Next a basis for $V(T)$ is computed for every $T$ labeled by $p_i$, that is, $T = p_i(T_1,T_2)$. In case the root of $T_1$ is $p_j$ with $j > i+1$ then $j-i-1$ zeros are added
to all vectors in the basis of $V(T_1)$, and similarly for $T_2$. Now a basis for $V(T_1) \cap V(T_2)$ is computed, and every vector in the basis is extended
by an extra 0 for the new position $p_i$. In case $s_{T_1T_2}$ does not exist then this is a basis for $V(T)$. In case $s_{T_1T_2}$ exists then one more vector is added to the basis, namely
$s_{T_1T_2}$ extended by an extra 1 for the new position $p_i$. In both cases a basis for $V(T)$ has been computed.

This is done for all $i$, running down from $n$ to 1. Then finally a basis of $T(f)$ is  obtained as a basis of the intersection of all $V(T)$, for $T$ running over the $m$ output nodes
of the BDD.
\end{proof}

\noindent\begin{minipage}{105mm}

To sketch what is going on in the algorithm we consider the same ROBDD with respect to $p  < q < r$ that we saw before.
First for the two nodes $T_r,U_r$ labeled by $r$ the one-dimensional vector $s_{T_rU_r}$ is computed, being equal to $s_{U_rT_r}$. The value 0 for the position $r$ fails since $s_{01}$ does not exist, but
the value 1 for the position $r$ succeeds since $s_{00} = s_{11} = \epsilon$. So $s_{T_rU_r} = (1)$: a vector of length 1 corresponding to the variable $r$, expressing that by swapping
$r$, the two $r$-nodes transform to each other. Next bases of $V(T_r)$ and $V(U_r)$ are computed, both being empty. Let $T_q,U_q$ be the two nodes labeled by $q$,
the next step is to compute $s_{T_qU_q} = s_{U_qT_q}$.
\end{minipage}\hspace{2mm}
\begin{minipage}{4cm}

\begin{center}
\begin{tikzpicture}
\node[circle,draw] (p) at (1,4) {$p$};
\node[circle,draw] (q1) at (0,3) {$q$};
\node[circle,draw] (q2) at (2,3) {$q$};
\node[circle,draw] (r1) at (0,1.5) {$r$};
\node[circle,draw] (r2) at (1.6,1.8) {$r$};
\node[circle,draw] (0) at (0,0) {$0$};
\node[circle,draw] (1) at (2,0) {$1$};
\draw (p) -- (q1);
\draw (p) -- (q2);
\draw (q1) -- (r1);
\draw (q2) -- (r2);
\draw (r1) -- (0);
\draw (r1) -- (1);
\draw (r2) -- (1);
\draw (r2) edge[out=330,in=30,looseness=1] (0);
\draw (q1) edge[out=300,in=60,looseness=1] (0);
\draw (q2) edge[out=320,in=45,looseness=1] (0);
\end{tikzpicture}
\end{center}
\end{minipage}

\vspace{1mm}

The algorithm computes the intersection of $(1)+V(T_{q1})$ consisting of the single vector $(1)$ and $s_{00} + V(T_{q2})$,
which is the full one-dimensional space, so the intersection consists of $(1)$. Extended by a 0 in front for the new position for variable $q$ this yields $s_{T_qU_q} = s_{U_qT_q} = (0,1)$.
Both $V(T_q)$ and $V(U_q)$ only consist of the zero vector $(0,0)$, and have an empty basis.

Finally, a basis for $V(T_p)$ is computed, for $T_p$ being the root node labeled by $p$. It starts by taking a basis for $V(T_q \cap V(U_q)$, and then it is checked whether
$s_{T_qU_q}$ exists. Indeed it exists and equals $(0,1)$, to be extended by 1 in front. So the empty basis is extended by the vector $(1,0,1)$. So the resulting basis of $V(T_p)$ is $(1,0,1)$.
Indeed, the effect of addition of $(1,0,1)$ is that both $p$ and $r$ are swapped, by which the BDD is transformed to itself, and according to the theorem apart from the identity
this is the only operation doing so.

%\section{Conclusion}

\bibliography{simonarx}

\begin{thebibliography}{1}

\bibitem{B86}
R.E. Bryant.
\newblock Graph-based algorithms for boolean function manipulation.
\newblock {\em IEEE Trans. Comput.}, 35(8):677–--691, 1986.

\bibitem{CHVB18}
E.M. Clarke, T.A. Henzinger, H.~Veith, and R.~Bloem.
\newblock {\em Handbook of Model Checking}.
\newblock Springer International Publishing, 2018.

\bibitem{HMU07}
J.E. Hopcroft, R.~Motwani, and J.D. Ullman.
\newblock {\em Introduction to Automata Theory, Languages and Computation}.
\newblock Addison Wesley, 2007.

\bibitem{L97}
S.~Lang.
\newblock {\em Introduction to linear algebra}.
\newblock Springer, 1997.

\bibitem{LRW97}
E.~M. Luks, F.~Rákóczi, and C.~R.~B. Wright.
\newblock Some algorithms for nilpotent permutation groups.
\newblock {\em Journal of Symbolic Computation}, 23(4):335–--354, 1997.

\bibitem{MT98}
C.~Meinel and T.~Theobold.
\newblock {\em Algorithms and data structures in {VLSI} design}.
\newblock Springer, 1998.

\bibitem{S97}
D.R. Simon.
\newblock On the power of quantum computation.
\newblock {\em SIAM Journal on Computing}, 26(5):1474–--1483, 1997.

\bibitem{V76}
L.~Valiant.
\newblock The relative complexity of checking and evaluating.
\newblock {\em Information Processing Letters}, 5(1):20--23, 1976.

\end{thebibliography}

\end{document}